\newtheorem{theorem}{Theorem}
\def\scriptO{{{\it O}\kern -.42em {\it `}\kern + .20em}}
\author{Jared Cook$^1$, Ralph C. Smith$^1$, Camila Ramirez$^2$, Nageswara S. V. Rao$^2$}
\date{$^1$Department of Mathematics, North Carolina State University, Raleigh, NC \\ $^2$Center for Engineering Science Advanced Research, ORNL , Oak Ridge, TN}
\title{Particle Filtering Convergence Results for Radiation Source Detection}
\begin{document}

\maketitle

\section*{Abstract}
Recent research has shown a weak convergence -- convergence in distribution -- of particle filtering methods under certain assumptions. However, some applications of particle filtering methods, such as radiation source localization problems, can be shown to have an extended convergence in the following sense. Using the assumptions of statistically independent measurements and a measurement process that does not depend on the state space, we prove the convergence of the posterior and its approximation by the particle filtering algorithm to the true Dirac distribution characterizing the source location. We design a Sampling Importance Resampling (SIR) filter to detect and locate a radiation source using a network of sensors. To numerically assess the effectiveness of this particle filter we employ it to solve a source localization problem using both experimental open field data sets from Intelligent Radiation Sensor Systems (IRSS) tests and numerically simulated urban domain data using a ray-tracing model. We then prove the theoretical convergence of the estimate to the state posterior probability density function. 

\section{Introduction}\label{intro}

With a focus of security applications, we consider the problem of estimating the location and intensity of an unknown radioactive source given count rate measurements provided by detectors at known locations. These networks of radiation detectors are increasingly being deployed to detect and localize sources of radiation within urban areas, for portal monitoring, at special events, and at boarder crossings. Because of this, an accurate model for processing the detector responses is required to detect and locate a given source quickly and accurately. Current research has focused on fusing the sensor data to determine the source location while attempting to decrease the number of sensors, increase the accuracy of the location, and increase the efficiency of the algorithm \cite{sensornet, Rao, C3, Razvan, refRao15}.

Particle filtering is a sequential Monte Carlo method for solving signal processing problems, which involves the estimation of posterior distributions. An essential component comprises the approximation of the internal states in dynamical systems when partial observations are made and random perturbations are present. To implement the particle filtering algorithm, we employ point masses, or “particles”, and their associated importance weights to estimate the posterior probability density function of the state, based on all available information \cite{C5, C1}. 

We use sequential importance sampling and Bayesian inference to solve a radiation source localization problem. We implement a Sampling Importance Resampling (SIR) particle filter designed to detect and locate a radiation source using a network of detectors. The Domestic Nuclear Detection Office's (DNDO) Intelligence Radiation Sensors Systems (IRSS) program supported the development of commercial radiation sensor networks for use in detection, localization and identification of primarily low-level radiation sources. Through this program, a range of indoor and outdoor tests were conducted using stationary and moving sources of varying strengths and types, different background profiles, and varying detector layouts. From the information gathered in these tests, canonical datasets were packaged for public release \cite{data}. We numerically assess the effectiveness of this particle filter by applying it to a subset of these datasets. 

In prior work, we employed a spatially two-dimensional ray-tracing algorithm to model detector responses to a radiation source in a simulated city block \cite{Razvan}. We used the open source code \texttt{gefry} to solve for the detector responses within this test environment \cite{gefry}. Whereas this algorithm is based on Boltzmann transport theory, scattering is neglected to significantly improve the efficiency. We employ this ray-tracing model to test the performance of the particle filtering algorithm on a more complex simulated urban source localization problem. Additionally, we employ the ray-tracing model and the particle filtering algorithm to implement a simple moving detector strategy for which we show improved source localization results, as expected. 

Lastly, we prove the theoretical convergence of the particle filtering estimate to the state posterior probability density function (PDF). There have been many developments in showing weak convergence in mean squared error of the empirical particle filtering distribution toward their true values \cite{convergence}. For our radiation source detection problem, we have statistically independent measurements and a measurement process that does not depend on the state space. These properties allow us to prove the particle filter applied to the radiation source detection problem converges to the underlying Dirac density centered at the true source characteristics. The assumption of the underlying distribution being a Dirac distribution is motivated by the large size of the domain, which allows us to treat the source as a point source.

This paper is organized as follows. In Section~\ref{opt_filter}, we begin by describing the model and the optimal filtering problem in a generic framework. Section~\ref{pf_section} discusses the particle filtering algorithm used in our analysis and in Section~\ref{rsl_section}, we formulate  the radiation source localization problem. In Sections~\ref{num_res}  and \ref{raytrace_res}, we provide numerical source localization results for our particle filtering algorithm using the test data from by the IRSS tests and the simulated data using a ray-tracing model. In Section~\ref{conv_analysis}, we present the convergence results for our particle filtering algorithm. We note that this section is further subdivided into proving that the the empirical distribution converges to its true underlying distribution and proving that the particles converge to the Dirac distribution centered at the true source location. Finally, we summarize in Section~\ref{conclusions} conclusions and future research. 

\section{Optimal Filtering}\label{opt_filter}

To define the optimal filtering problem, consider the unobservable state sequence within the state space $\Omega$ and the sequence of partial or noisy measurements 
\begin{equation} 
\{x_k\in \Omega = \mathbb{R}^{n_x}, k\in \mathbb{N}_{\geq 0}\},  \label{eq-seq1}
\end{equation}
\begin{equation}
\{y_k\in \mathbb{R}^{n_y}, k \in \mathbb{N} \}, \label{eq-seq2}
\end{equation} 
which are Markov processes represented by the equations
\begin{equation}
x_{k}=f_k(x_{k-1},u_{k-1}),\\  \label{eq-state} 
\end{equation}
\begin{equation}
y_{k}=h_k(x_{k},v_k). \label{eq-obs}
\end{equation}

\noindent Here 
\begin{align*}
&f_k(\cdot ):\mathbb{R}^{n_x}\times \mathbb{R}^{n_u}\rightarrow \mathbb{R}^{n_x}, \quad \text{and} \\
&h_k(\cdot ):\mathbb{R}^{n_x}\times \mathbb{R}^{n_v}\rightarrow \mathbb{R}^{n_y}
\end{align*}
are the known, time-dependent, and possibly nonlinear system transition and measurement functions. We note that $u_{k-1}$ and $v_k$ are independent and identically distributed (iid) state and measurement noise vectors of size $n_u$ and $n_v$ respectively and $k$ is the time index, $k=1,\hdots ,T$. 

The optimal filtering problem consists of estimating the state sequence (\ref{eq-seq1}) recursively using the information provided by the observation process (\ref{eq-seq2}). In a Bayesian setting, this process can be formalized as the computation of the filtering distribution $\pi(x_k|y_{1:k})$, where $y_{1:k}$ denotes the measurements $y_{1},\hdots, y_{k}$. Here, we assume that the prior PDF $\pi(x_0)$ of the state vector is available. When this is the case, the computation of the filtering distribution can be performed recursively in two steps: prediction and update. 

In the prediction step, we assume that the filtering distribution from the previous recursion $\pi(x_{k-1}|y_{1:k-1})=\pi_{k-1}$ is available, and we compute the posterior distribution

\begin{equation}\label{eq-predict1}
\pi(x_{k}|y_{1:k-1})= \int_{\mathbb{R}^{n_x}} \pi(x_{k}|x_{k-1})\pi(x_{k-1}|y_{1:k-1})dx_{k-1}.
\end{equation}
The probabilistic model of the state evolution $\pi(x_{k}|x_{k-1})$ is defined by (\ref{eq-state}) and the known state noise statistic $u_{k-1}$. Next, we assume that at time $k$, we obtain a new measurement $y_k$ and we employ Bayes' relation 
 to compute the updated PDF   
\begin{equation} \label{eq-update1}
\pi(x_k|y_{1:k}) = \frac{\pi(y_k|x_k) \pi(x_k|y_{1:k-1})}{\pi(y_k|y_{1:k-1})} \equiv \pi_k. 
\end{equation}
Here, the normalizing constant is $\pi(y_k|y_{1:k-1}) =  \int_{\mathbb{R}^{n_x}} \pi(y_k|x_k) \pi(x_k|y_{1:k-1}) dx_{k}$, where the likelihood function $\pi(y_k|x_k)$ is defined by (\ref{eq-obs}) and the known measurement noise statistic $v_k$ \cite{convergence}.

 To illustrate how the state evolution model and likelihood function are defined, consider the example of Kalman filtering where we assume that the state (\ref{eq-state}) and measurement (\ref{eq-obs}) equations are linear 
\begin{align*}
&x_{k}=A x_{k-1} + Bu_{k-1}, \\
&y_{k}=C x_{k} + D v_k.
\end{align*}
Given this, we define the probabilistic state evolution and the likelihood as
\begin{align*}
&\pi(x_{k}|x_{k-1}) \sim \mathit{N}(Ax_{k-1}, BB^T), \\
&\pi(y_k|x_k) \sim \mathit{N}(Cx_k, DD^T),
\end{align*} 
where A, B, C, and D are matrices of appropriate size. We explicitly define the state evolution model and likelihood function for our specific problem in Section~\ref{rsl_section}.

The recursive propagation of the posterior density in (\ref{eq-predict1}) and (\ref{eq-update1}) is computationally intractable for large state spaces $n_x$ due to the integration required to compute $\pi (y_k|y_{1:k-1})$ and $\pi(x_k|y_{1:k-1})$, so we must resort to approximation methods such as Monte Carlo approximation. 

\section{Particle Filtering}\label{pf_section}
In particle filtering, one uses particles and corresponding weights whose empirical measure approximates the target posterior distribution $\pi(x_{0:k}|y_{1:k})$. Here, we summarize the Sampling Importance Resampling (SIR), or bootstrap filter, algorithm, which uses particle and weight pairs $\{p_k^i,w_k^i\}_{i=0}^N$ to approximate the target distribution as

\begin{equation*}
    \pi(x_{0:k}|y_{1:k})\approx \sum_{i=1}^N w_k^i \delta(x_k-p_k^i).
\end{equation*} 
Here, $\delta(x_k-p_k^i)$ is the Dirac distribution centered at $p_k^i$.  We compute the weights using importance sampling and normalize them such that $\sum_{i=1}^N w_k^i = 1$; e.g., see \cite{C5}. 

To illustrate importance sampling, suppose we want to approximate the distribution $\pi(x)$ with discrete random measures $p^i$. If we were able to generate points from $\pi(x)$, each would be assigned an equal weight of $1/N$. When direct sampling is not possible, we employ importance sampling by sampling from the importance density $p^i \sim \pi^N(x)$ instead, where $\pi^N$ is chosen to ensure efficient sampling. We then assign the weights as

\begin{equation*}
    w^i = \frac{\pi(p^i)}{\pi^N(p^i)}
\end{equation*} 
and normalize the weights so they sum to one.

Assume now that the posterior distribution $\pi(x_{0:k-1}|y_{1:k-1})$ is approximated by the discrete random measure $\{p_{k-1}^i,w_{k-1}^i\}^N_{i=1}$ such that the particles are distributed according to $\pi(x_{0:k-1}|y_{1:k-1})$. If we employ an importance function that can be factored as 

\begin{equation*}
    \pi^N(x_{0:k}|y_{1:k}) = \pi^N(x_k|x_{0:k-1},y_{1:k})\pi^N(x_{0:k-1}|y_{1:k}),
\end{equation*}
and if 

\begin{equation*}
    w_{k-1}^i\propto \frac{\pi(p_{0:k-1}^i|y_{1:k})}{\pi^N(p_{0:k-1}^i|y_{1:k})},
\end{equation*}
then we can augment the particles $p_{0:k-1}^i$ with $p_k^i \sim \pi^N(x_k|p_{0:k-1}^i,y_{1:k})$. Employing Bayes' relation, the associated weights are then recursively defined as 
\begin{equation}\label{eq-theoreticalweights}
    w_k^i \propto \frac{\pi(y_k|p_k^i)\pi(p_k^i|p_{k-1}^i)}{\pi^N(p_k^i|p_{0:k-1}^i,y_{1:k})}w_{k-1}^i.
\end{equation}
We generally want the importance distribution $\pi^N(x_{0:k}|y_{1:k})$ to be as close to $\pi(x_{0:k}|y_{1:k})$ as possible. We also employ the likelihood $\pi(y_k|p_k^i)$ and prior $\pi(p_k^i|p_{k-1}^i)$ in this computation and we note that these are often employed as the importance distribution. 

In this way, the recursive SIR particle filtering algorithm is divided into three steps. The prediction step is performed by sampling particles via importance sampling

\begin{equation} \label{eq-predict}
\{p_k^i\}_{i=0}^N \sim \pi^N(x_k|x_{k-1}) \equiv \pi_{k-1}^N. 
\end{equation} 
These particles are used to construct the empirical distribution $\pi^N(x_{k}|y_{1:k-1})$, which is an approximation of $\pi(x_{k}|y_{1:k-1})$. The updating step provides the empirical posterior distribution 

\begin{equation} \label{eq-update}
\tilde{\pi}^N(x_k|y_{1:k}) = \sum_{i=1}^N w_k^i \delta(x_k-p_k^i) \equiv \tilde{\pi}_k^N,
\end{equation}
where the importance weights are calculated according to (\ref{eq-theoreticalweights}) and normalized so that $\sum_{i=0}^N w_k^i = 1$.

\par The third step is introduced to address a major problem with particle filters, termed the degeneracy problem or sampling impoverishment.  This problem occurs when after several iterations only a few of the particles have significant weights and all the other particles have negligible weights. A common way to address this issue is by resampling, where a new set of particles is drawn from the discrete approximation to the filtering distribution, $p_k^i \sim \tilde{\pi}^N(x_k|y_{1:k})$. After this third resampling step, the particle weights are all set to $1/N$ and the final empirical distribution approximating the posterior is given by 

\begin{equation}\label{eq-posteriorEst}
\pi^N(x_k|y_{1:k}) = \frac{1}{N}\sum_{i=0}^N \delta (x_k-p_k^i) \equiv \pi_k^N.
\end{equation}

\section{Radiation Source Localization}\label{rsl_section}
Unlike Kalman Filtering methods, which rely on the assumptions of linearity and Gaussian distributions, particle filtering is suitable for radiation source detection and localization, since they are nonlinear problems with high variance Poisson measurements. In this framework, the particles represent potential sources and they dynamically evolve using information from the detector measurements. The particles that are more likely to be sources are retained following the resampling step and cluster around the true source. 

\par Here, we describe a Sampling Importance Resampling (SIR) particle filter designed to locate a fixed radiation source comprised of a location vector and radiation strength scalar. We use a fixed number of detectors, each having a set number of measurements. The particles for this problem, $p_k^i = (s_k^i,t_k^i,I_k^i)\in \Omega$, consisting of the two-dimensional position $(s_k^i,t_k^i)$ and radiation source strength $I_k^i$, are each drawn from a uniform distribution over their respective domains $\Omega = [s_{min},s_{max}]\times[t_{min},t_{max}]\times[I_{min},I_{max}]$.

\par The prediction step takes $p_k^i = p_{k-1}^i$, since the importance sampling function is independent of the measurement $y_k$. Therefore, the state space is explored without any knowledge of the observations and the state process $f_k(\cdot)$ is taken to be the identity matrix of appropriate size. If prior information was known about the source location or trajectory, we would use this information in our formulation of the state process, but we assume here that no information is known \textit{a priori}. 

\par The weights corresponding to each particle represent the likelihood that the particle accurately approximates the true source characteristics based on detector measurements that are Poisson distributed. We have in (\ref{eq-theoreticalweights}) that the weights are proportional to the likelihood $w_k^i \propto \pi(y_k|p_k^i)$ so we approximate the weights by setting them equal to the log of the likelihood. We take the likelihood to be a Poisson distribution with a mean value provided by a radiation detector model which approximates the detector responses to a source represented by a particle $p_k^i$. We compare the detector model responses with the observed detector responses via the Poisson distribution. Therefore, we employ the weights 

\begin{equation} \label{eq-weights}
\begin{aligned}
w_k^i&=\log\left(\pi({y}_k|p_k^i)\right) = \log\left(P(y_k; u^K_j(p_k^i))\right) = \log\left(\prod_{j=1}^d P(y_k^j; u^K_j(p_k^i))\right) \\
&= \sum_{j=1}^d \log\left(P(y_k^j; u^K_j(p_k^i))\right)= \sum_{j=1}^d \log\left(\frac{u^K_j(p_k^i)^{y_k^j}e^{-u^K_j(p_k^i)}}{y_k^j!}\right),
\end{aligned}
\end{equation} 
since the measurements $y_k=[y_k^1,\hdots, y_k^d]$ are independent. We take measurement process $h_k$ to be the Poisson distribution $P(\ \cdot \ ;  \ \cdot \ )$ with parameter provided by the detector model response $u^K(\cdot)$. Since the IRSS experiments were performed in an open field environment, we employ the $1/\text{distance}^2$ -- i.e., quadratic attenuation (QA) -- detector model
\begin{equation*}
u_j^{QA}(p_k^i) = \frac{I_k^i  \cdot \epsilon_j \cdot A_j \cdot \Delta t_j}{4\pi ||r_j-r_k^i||_2^2} = \frac{I_k^i  \cdot \epsilon_j \cdot A_j \cdot \Delta t_j}{4\pi [(s_j-s_k^i)^2+(t_j-t_k^i)^2]}
\end{equation*} 
when using the experimental data employed in Section~\ref{num_res}. Here, we denote the detector locations as $r_j = (s_j,t_j)$ and $A_j$, $\epsilon_j$, and $\Delta t_j$ are the surface area, intrinsic efficiency, and dwell time of the $j$th detector. 

We also employ a ray-tracing (RT) model to determine the expected detector count rates when we use simulated data. The full derivation of this model from the Boltzmann transport equation is provided in \cite{Bkgrd}. For each detector, we construct a ray from the source or particle location to the detector location. We compute the number of intersections with the $\mathit{N}$ buildings along the path of this ray, and the length $\ell_h$ of the ray segment through each of the $h=1,\hdots,\mathit{N}$ buildings. The buildings in the domain are assumed to be homogeneous, each having a mean free path $\lambda^h, \ h=1,\hdots,\mathit{N}$ \cite{Razvan}. These assumptions yield
\begin{equation} \label{eq-numerMod}
u_j^{RT}(p_k^i)=I_k^i \frac{\Delta t_j\cdot \epsilon_j\cdot A_j}{4\pi ||r_k^i-r_j||_2^2}\text{exp}\Bigg(-\sum_{h=1}^{\mathit{N}}\frac{\ell_h}{\lambda^h}\Bigg). 
\end{equation} 
We employ the Python code \texttt{gefry} \cite{gefry} to implement this numerical model, which is a significant simplification of the original problem derived from Boltzmann transport theory.

The model $u_j^{K}(\cdot)$, $K\in \{QA, \ RT \}$ approximates the expected count rate due to the source of the $j$th detector. The summation in (\ref{eq-weights}) is taken over the total number of detectors $d$. We note that the measurements $y_k$, $k=0,\hdots,T$, in (\ref{eq-weights}) are statistically independent of each other. We also note that this approach of computing the weights is similar to assigning these weights the values of a Gaussian likelihood 
\begin{equation}\label{eq-Likelihood}
\pi(y|q) = \frac{1}{(2\pi \sigma^2)^{d/2}}e^{-SS(q)/2\sigma^2}
\end{equation}
where 
\begin{equation}\label{eq-sos}
SS(q) = \sum^d_{j=1}[y_j-u_j^K(q)]^2, \ K\in \{QA, \ RT \}
\end{equation}
when the detectors record counts greater than approximately 30. However, the experimental data we employ in Section~\ref{num_res} was performed using low-level sources and the detectors provided counts each second. Therefore, detectors far from this low level source do not have count rates consistently greater than 30 for one second of dwell time. In Section~\ref{raytrace_res}, we simulate sources with larger intensities and we note that employing a Gaussian likelihood in that application yields similar localization results to when a Poisson likelihood is employed. 

To address the issue of sampling impoverishment, at every iteration we resample 60\% of the particles. To perform this resampling strategy, we order the particles by their weights and we retain the 40\% highest weighted particles whereas the bottom 60\% are replaced by particles drawn from a uniform distribution. This simple strategy is used to decrease the complexity of the convergence analysis, but does not perform as well numerically as other strategies, such as Multinomial or Systematic resampling \cite{Camila}. We leave the implementation of these resampling methods in the context of this application as future work. 

Lastly, we provide the SIR particle filtering algorithm for the radiation source localization problem in Algorithm~\ref{alg-pf}. We note that we take the prior distribution to be uniform over the state space $\pi (x_0) = \mathit{U} (\Omega)$ and we take the importance distribution to be equal to the prior $\pi^N( x_k | x_{k-1}) = \pi(x_0)=\mathit{U}(\Omega)$ to begin our analysis. Therefore, the particles are repositioned in the resampling step with no knowledge of the posterior estimate. 

\begin{algorithm}[H]
   Sample particles $p_0^i, i=1,\hdots,N$ from the prior distribution $p_0^i \sim \pi(x_0)$. \\
    For $k=1,\hdots,M$
 \begin{algorithmic}[1]
    	\item Take measurement $y_k$.
	\item Compute the weights $w_k^i, i=1,\hdots,N$ according to (\ref{eq-weights}).
	\item Normalize weights so that $\sum_{i=0}^N w_k^i = 1$. 
	\item Reorder particles by their weights and resample particles $p_k^i, i=1,\hdots \text{floor}(0.6\times N)$ from the importance distribution $p_k^i \sim \pi^N(x_k | x_{k-1})$.
	\item Compute posterior estimate according to (\ref{eq-posteriorEst})
    \end{algorithmic} 
    \caption{SIR Particle Filtering Algorithm}
    \label{alg-pf}
\end{algorithm}

\section{Numerical Results for Experimental Data}\label{num_res}
The Intelligence Radiation Sensor Systems (IRSS) program was initiated in 2009 by the Department of Homeland Security's Domestic Nuclear Detection Office (DNDO). The objective was to demonstrate that a system of networked detectors could outperform, in both efficiency and accuracy, an individual detector in the detection, localization, and identification of a radiation threat. The lack of experimental datasets for testing this hypothesis was addressed by performing a series of IRSS tests and creating canonical datasets. 

Three independent companies developed networked systems for the IRSS characterization tests and each developed their own detection, localization, and identification capabilities. The IRSS tests were conducted with multiple experiments performed for each of several indoor and outdoor configurations with multiple source strengths and types, different background profiles, and various types of source and detector movements. The majority of the tests were carried out with cesium-137 and cobalt-57 isotopes as sources; however, we only employ the experiments in which cesium was used as the source. The tests were conducted at the Savannah River National Laboratory (SRNL) and the indoor tests were performed in the Low Scatter Irradiator (LSI) facility at SRNL. 

The spectral counts -- counts of gamma particles with different energies -- collected by the detection devices were mapped into 21 spectral bins. This mapping was based on the selected set of isotopes and on the energy resolution of the detectors employed \cite{data}. The tests we employ in this paper to assess the accuracy of the SIR particle filter are labeled LSI~A04, LSI~C01, LSI~C02, LSI~C03 and LSI~C04 and the experimental layout of each is described in Table~\ref{tab-datasets}. For each of these experiments, the detectors and a source are stationary in the 10~m by 10~m indoor domain. The detectors are NaI 2 inch by 2 inch detectors randomly scattered around the source.

\begin{table}[!t]
    \centering
    \caption{Description of the experimental test sets.}
    \begin{tabular}{|c|c|c|c|}
    \hline \hline
        Tests & $I_{source} \ (\mu\text{Ci})$ & $I_{source} \ (\text{Bq})$ & $(x_{source},y_{source})$ in (cm)  \\
        \hline
        LSI~A04 & 35 & $1.295\times10^6$ & (0, \ 0) \\
        LSI~C01 & 7.6 &  $2.812\times10^5$ & (0,\ 0) \\
        LSI~C02 & 7.6 & $2.812\times10^5$  & (71, \ 71) \\
        LSI~C03  & 7.6 & $2.812\times10^5$  & (141, \ 141) \\
        LSI~C04  & 7.6 & $2.812\times10^5$  & (283,\ 283) \\
    \hline \hline
    \end{tabular}
    \label{tab-datasets}
\end{table}

The datasets are packaged with a MATLAB dashboard that provides an overview of the corresponding dataset. For each scenario, the dashboard plots the detector layout, the measurements of the nearest and furthest detectors over time, the counts from a spectral bin of the nearest detector, and the spectra of all detectors. Figures \ref{fig-d1} and \ref{fig-d2} depict the standard dashboard plots for the LSI~A-04 test. We see from the detector layout in Figure~\ref{fig-d1} that detector 14 is the farthest from the source and detector 6 is the nearest. We note that the LSI A04 experiment was conducted with 21 detectors whereas the LSI C experiments employed 22 detectors. The layout of the 22 detectors in the LSI C test cases is similar to the layout in the LSI A04 experiment. The 12th bin includes counts of particles with energies between $621 \text{ KeV}$ and $704 \text{ KeV}$ which corresponds to the 662 KeV cesium-137 photopeak. Figure~\ref{fig-d1}(d) depicts the time series measurements from the 12th spectral bin of the nearest detector to the source.

\begin{figure}
\centering
\includegraphics[height=120mm, width=150mm]{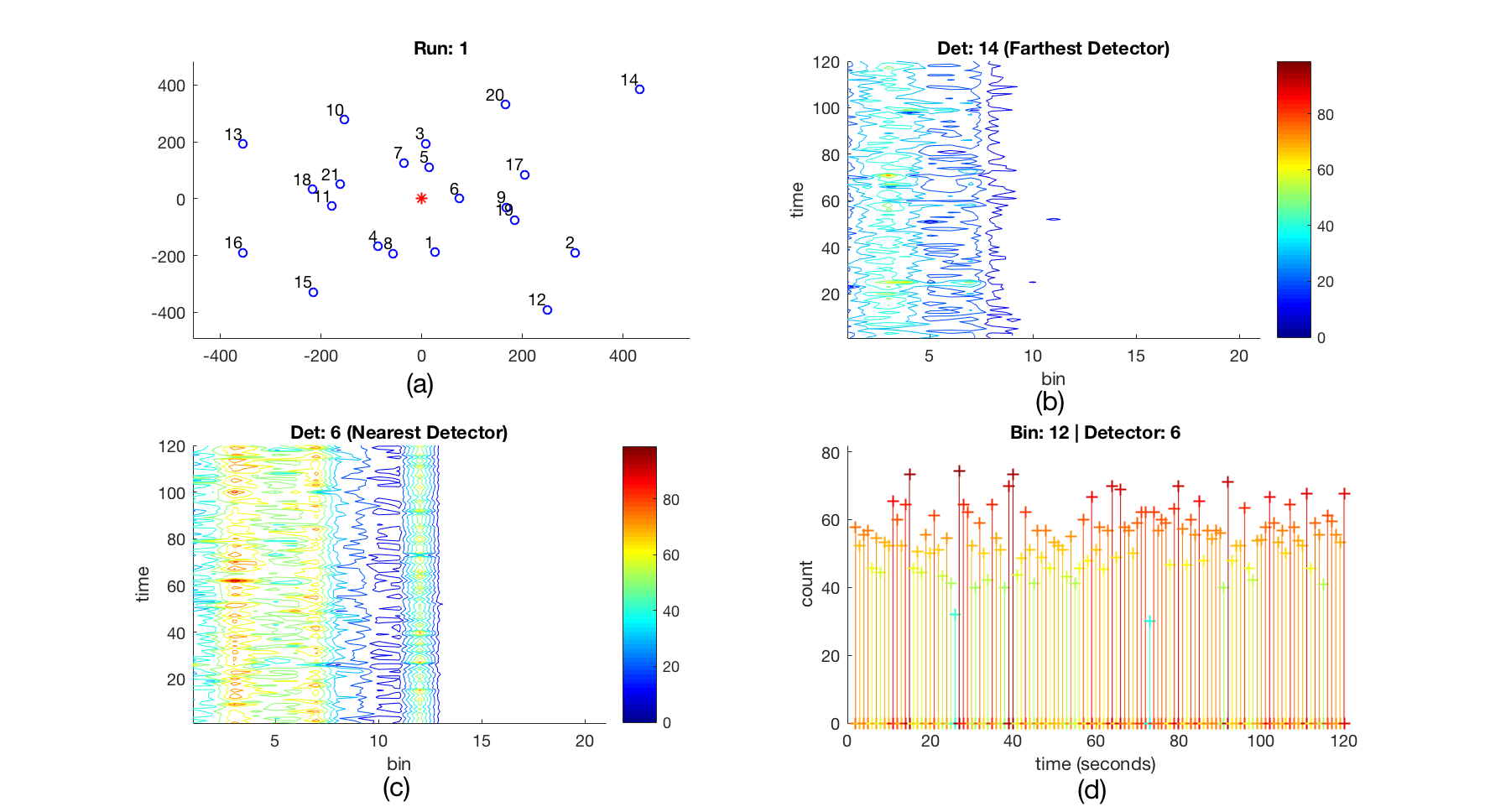}
\caption{(a) Experimental layout of the detectors, binned spectra of the (b) farthest detector from source location and (c)  nearest detector to source, and (d) time-series counts from the 12th spectral bin, associated with the cesium-137 photopeak, of the nearest detector to the source.}
\label{fig-d1}
\end{figure} 

\begin{figure}
\centering
\includegraphics[height=150mm, width=170mm]{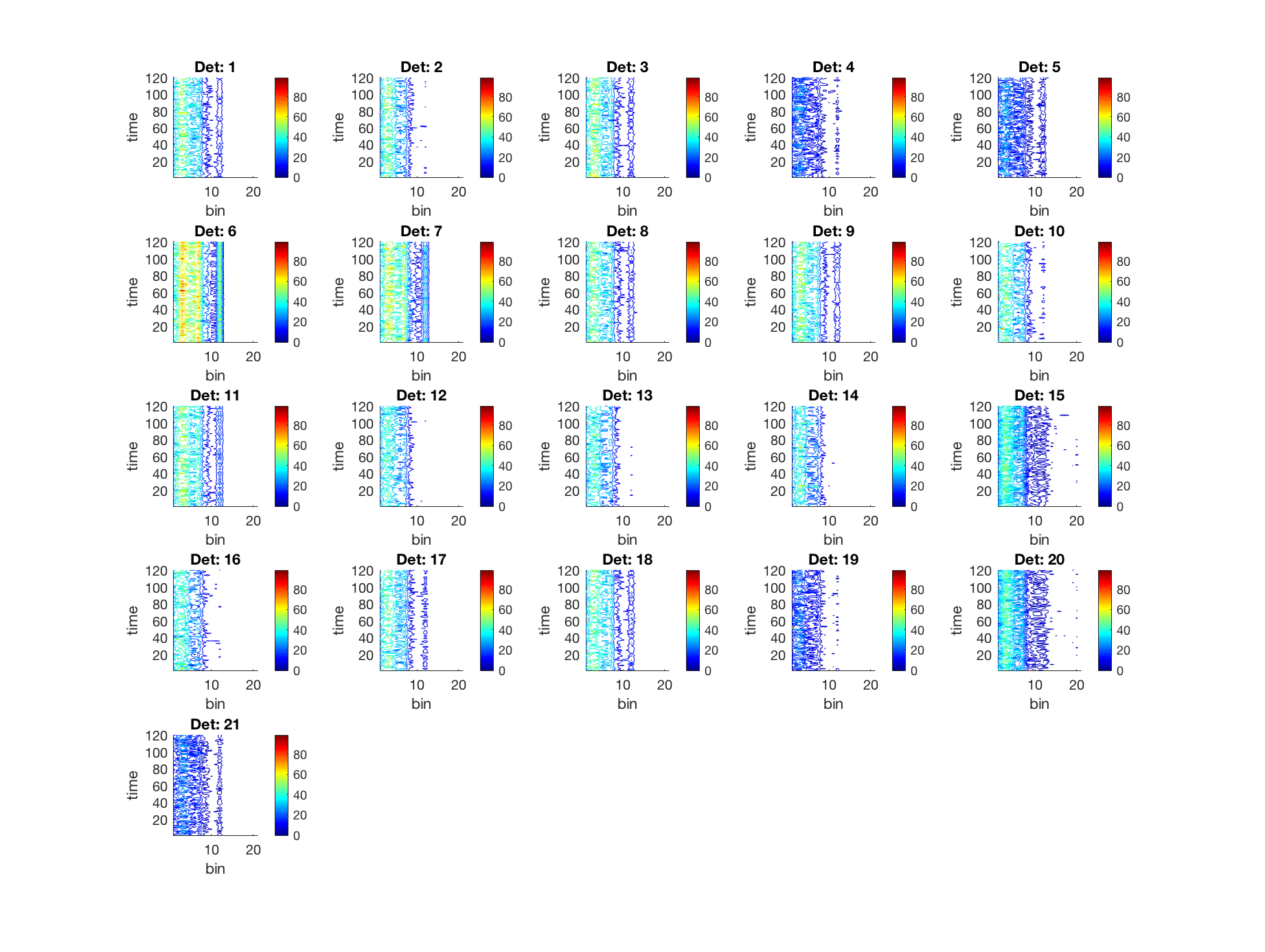}
\caption{Experimental time series binned spectral counts from each detector.}
\label{fig-d2}
\end{figure}

In addition, background measurements were taken by the detectors for the experimental setup of LSI~A04, mapped similarly into 21 spectral bins, and compiled in the LSI~A-Background dataset. The detectors in the LSI~C datasets are positioned slightly differently and include an additional detector. Furthermore, the background measurements were taken every second for only 60 seconds, whereas for the experiments where a source is present, measurements were taken every second for 120 seconds. 

To incorporate these background measurements, we find for each detector in the LSI~C datasets the closest corresponding detector in the LSI~A-Background dataset.  We simulate detector background measurements by drawing from a Poisson distribution with a mean given by the mean experimental background measurement.  Then, for each detector in each of these datasets, we subtract the Poisson distributed background sample from the detector count rate to provide us with the counts from the radiation source. We account for negative values after subtracting the Poisson simulated background by setting those values to zero. 

The first dataset, LSI~A04, was an indoor test with one stationary source and 21 stationary detectors.  Nine repetitive experiments were executed, with each run lasting approximately $T=120$ seconds and counts taken every second, but we employ only the first experiment in each case in testing the particle filter. We draw the particles from the uniform state space $\mathit{U}[\Omega]=\mathit{U}[(-5, \ 5 \text{m})\times (-5, \ 5 \text{m})\times (1e4 \text{ Bq}, \ 1e7 \text{ Bq}])]$ and employ $N=1000$ particles. For our particle filter, we choose to resample $f\times N$ of the particles at each time step $k$. We choose $f=0.6$ and we resample the $f\times N$ particles that have the lowest corresponding weights from the same uniform distribution used originally to generate the particles. The final scattering of the particles is shown in Figure~\ref{fig-fig1}(a). We note that the particles converge to what looks like a discrete approximation of a normal distribution centered near the true source location and with little variance. 

The other datasets we use, LSI~C01, LSI~C02, and LSI~C03, are also indoor tests with a stationary source and 22 stationary detectors. Again, the source, cesium-137 with radiation strength of 7.6 $\mu$Ci, was stationary within the 10 meter by 10 meter domain. As seen in Figure~\ref{fig-fig1}(b) and compiled in Table~\ref{tab-datasets}, the source was placed in the center of the domain in LSI~C01, but the sources in the LSI~C02, C03, and C04 experiments were placed in the northeast area of the domain, increasingly farther from the center. 

We note that the particle filtering algorithm fails to locate the source in this case, although there is a small cluster of particles at the source location. This is due in part to the low source strength, which is nearly an order of magnitude lower than in the LSI~A04 experiment, but may also be due to the simplified resampling scheme we employ to prove analytic convergence. Numerical convergence was shown for this same problem when more sophisticated resampling schemes were utilized \cite{Camila}. However, those results employed an additional scaling factor that artificially inflated the detector counts. We note that we obtain convergence of this algorithm when a similar scaling factor is employed in our algorithm, however we instead constrain our prior distribution to investigate whether limited prior information allows for convergence of this algorithm with the low-level sources employed to obtain this experimental data. 

Source multilateration requires that the source be within the convex hull of the detectors \cite{Cook_thesis}, which is the case for all of the experimental test cases considered in this paper. To test the effectiveness of this simple particle filtering algorithm, we draw the particles from a uniform distribution over the convex hull of the detectors. To do this, we construct the normal vectors between the detectors and use them to test whether a randomly drawn particle is within the convex hull. If it is, then a weight is calculated for it within the particle filtering algorithm, but if not, that point is discarded and another one drawn. 

We plot the results for this analysis using datasets LSI~C01 and C03 in Figure~\ref{fig-conv}. We note that there are clusters of particles, representing regions of high probability, at the source location. However, the right edge of the convex hull domain has a large cluster of particles as well. Again, this is likely due to the low levels of radiation in the experiments produced by the sources and the fact that we are attempting to infer the source location as well as the source intensity. Similar detector readings can be caused by sources of vastly different strengths depending on the source locations.  To test if the algorithm results converge, we instead sample from a Poisson distribution with mean given by the mean response from the experimental detector readings to simulate more detector measurements. 

By employing both the uniform distribution over the convex hull as the sampling distribution and the Poisson distribution to simulate more data, we observe particle convergence to the source location. We plot these results for the datasets LSI~C02 and C04 in Figure~\ref{fig-convPois} employing 1000 seconds of simulated data. We observe that the particles have converged after approximately 300 seconds. Here, we are able to localize the source to within approximately 1.5 meters given less than 5 minutes of measurement time and the prior information that the source is within the convex hull of the detector network. 

\begin{figure}[H]
    \centering
    \begin{subfigure}[b]{0.49\textwidth}
        \includegraphics[width=\textwidth]{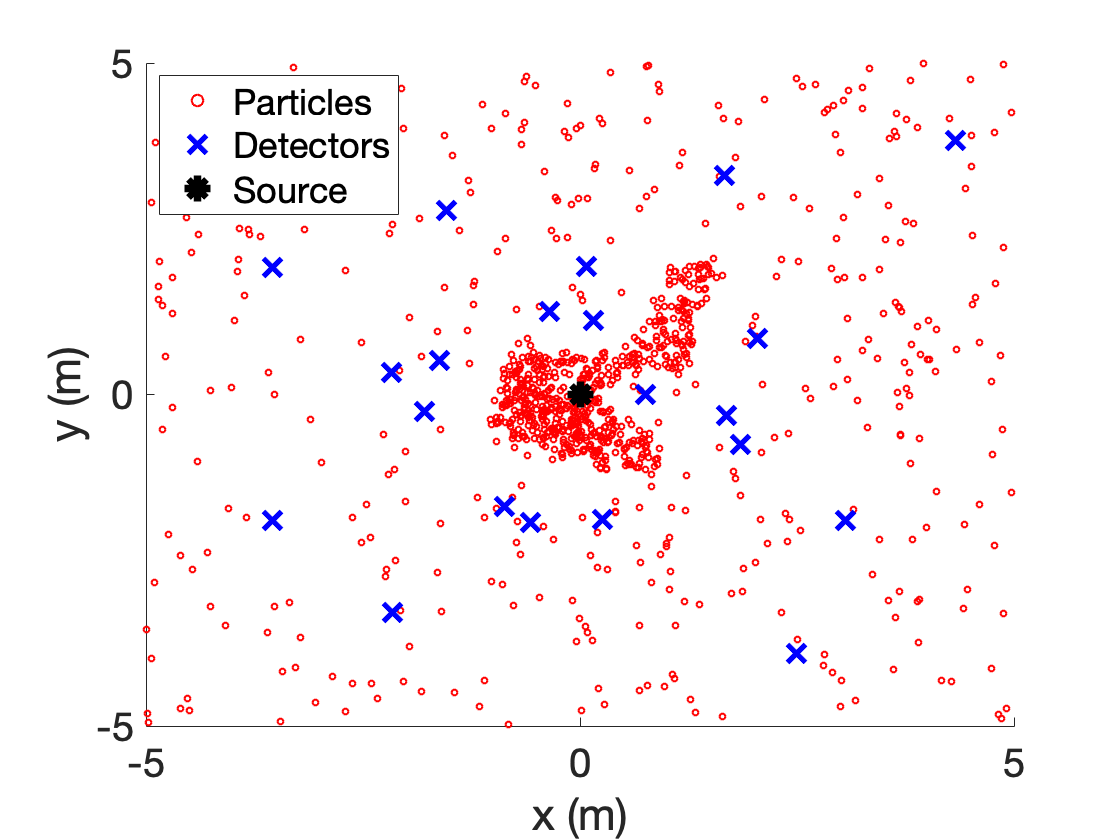}
    \end{subfigure}
    \begin{subfigure}[b]{0.49\textwidth}
        \includegraphics[width=\textwidth]{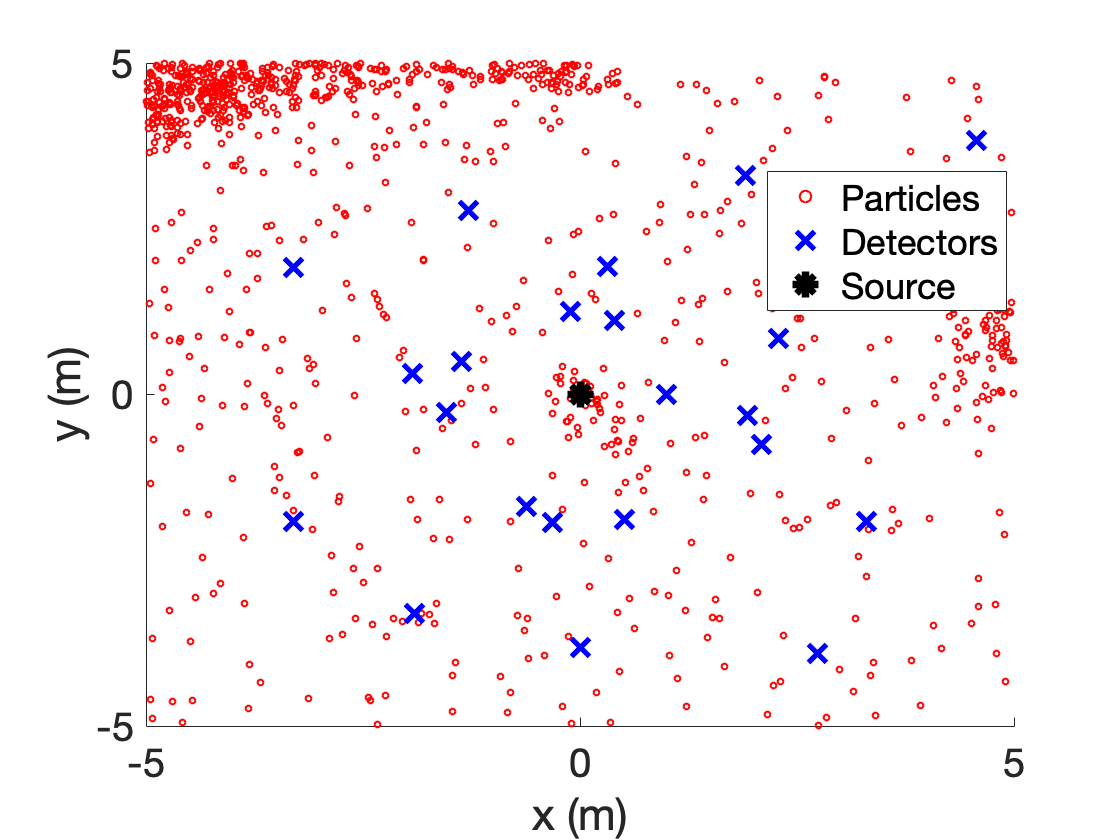}
    \end{subfigure}    
    \caption{Particle filtering results for the (a) LSI~A04 and (b) LSI~C01 experimental layouts.}
    \label{fig-fig1}
\end{figure}

\begin{figure}[H]
    \centering
    \begin{subfigure}[b]{0.49\textwidth}
        \includegraphics[width=\textwidth]{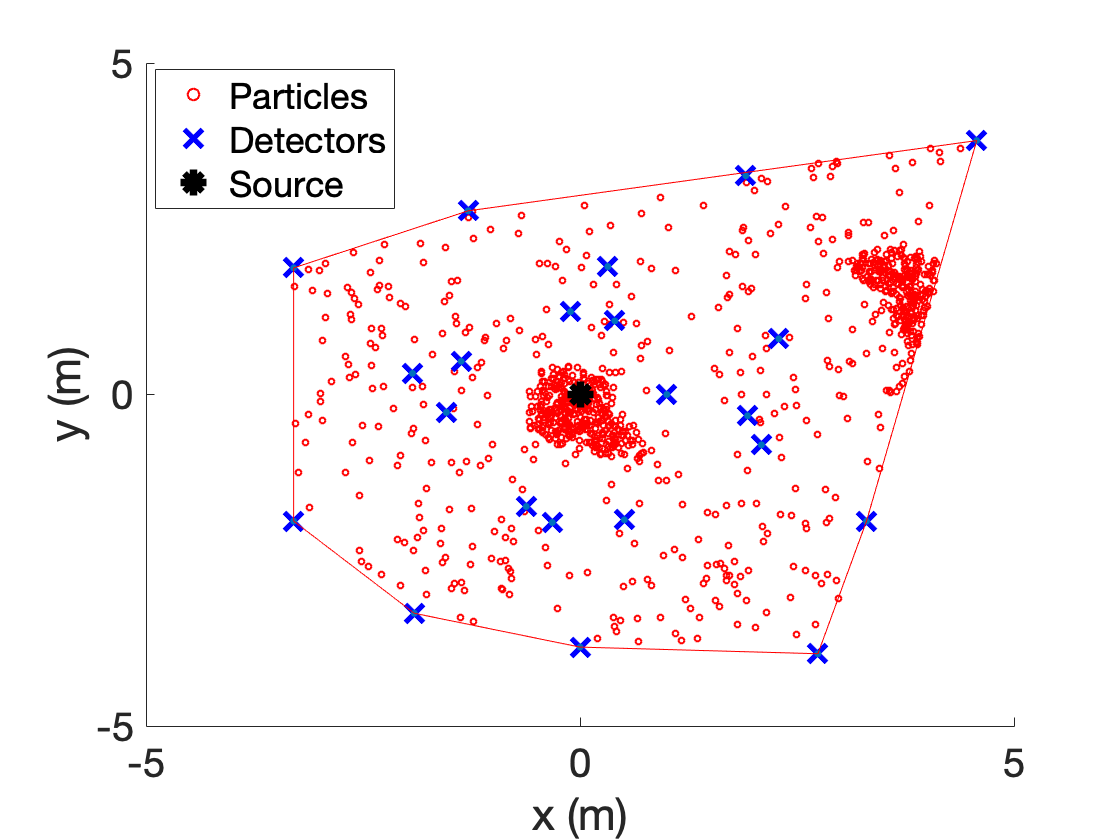}
    \end{subfigure}
    \begin{subfigure}[b]{0.49\textwidth}
        \includegraphics[width=\textwidth]{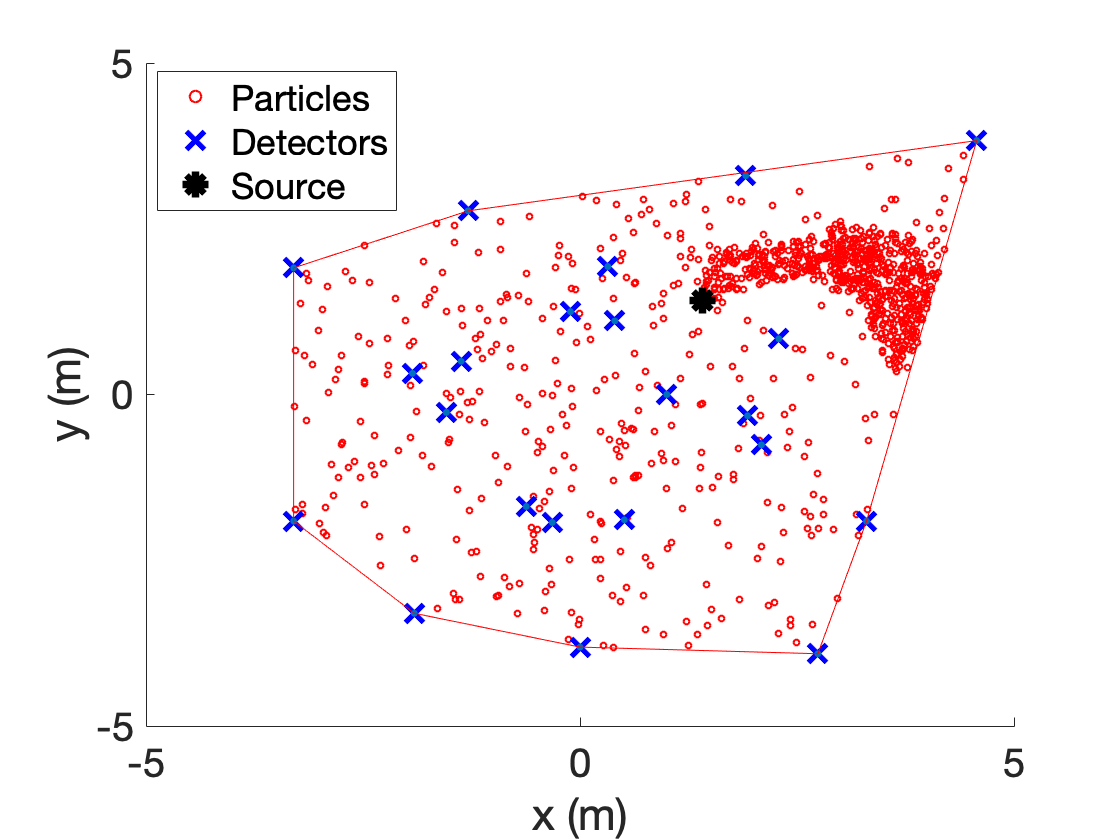}
    \end{subfigure}    
    \caption{Particle filtering results employing a uniform sampling distribution over the convex hull of the detector network for the (a) LSI~C01 and (b) LSI~C03 experimental layouts.}
    \label{fig-conv}
\end{figure}

\begin{figure}[H]
    \centering
    \begin{subfigure}[b]{0.49\textwidth}
        \includegraphics[width=\textwidth]{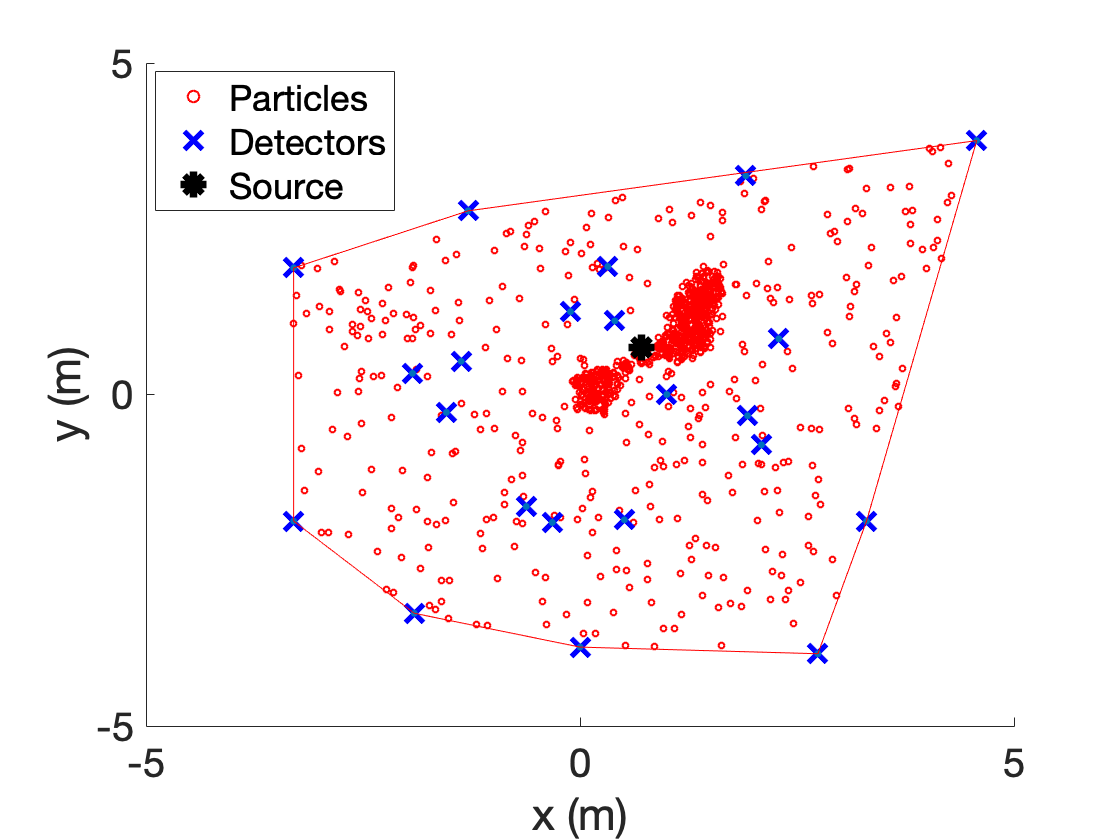}
    \end{subfigure}
    \begin{subfigure}[b]{0.49\textwidth}
        \includegraphics[width=\textwidth]{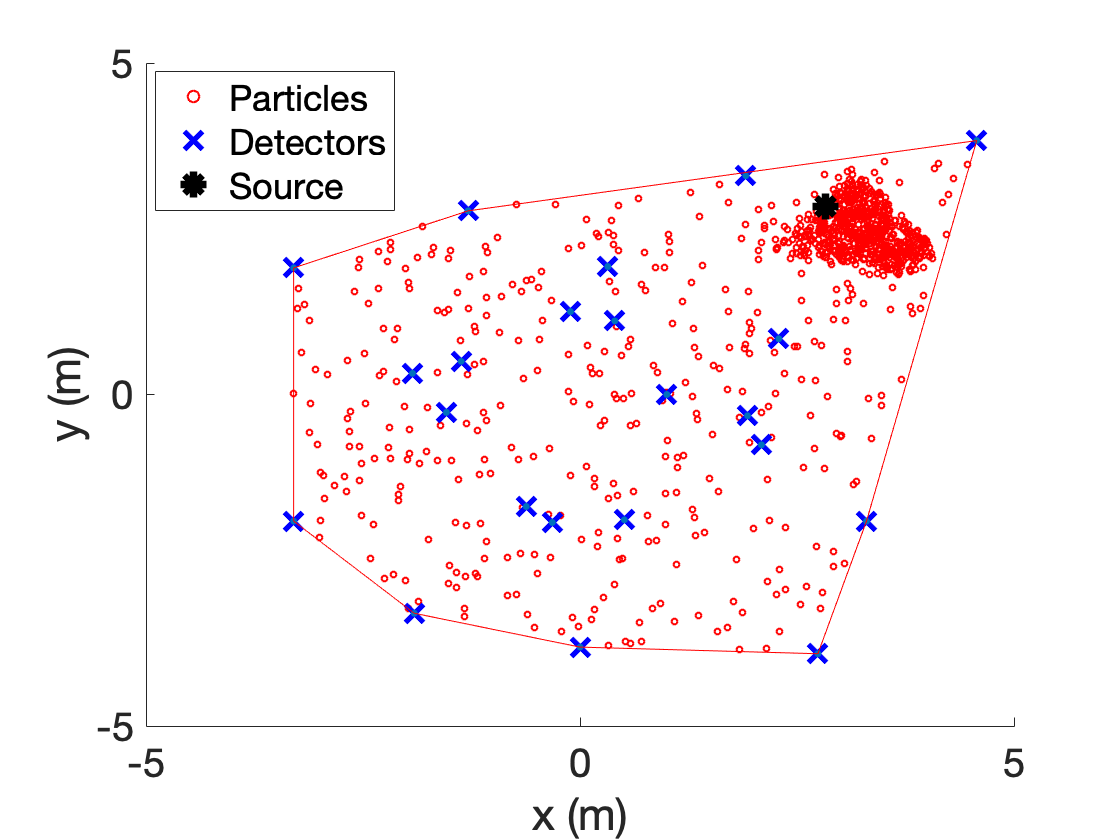}
    \end{subfigure}    
    \caption{Results for the particle filtering algorithm employing a uniform distribution over the convex hull of the detectors, a poisson distribution to simulate additional data, and applied to experimental test cases (a) LSI~C-01 and (b) LSI~C-03.}
    \label{fig-convPois}
\end{figure}


\section{Numerical Results for Simulated Data}\label{raytrace_res}
To assess the performance of this particle filtering algorithm on a realistic urban source localization scenario, we simulate a domain that is a $250 \text{ m} \times 180 \text{ m}$ block in downtown Washington, D.C. \cite{Razvan, Bkgrd}. We construct a 2-D representation of the domain using data from the OpenStreetMaps database.

We treat the buildings as disjoint polygons of uniform density and composition. A satellite photo with the building cross-sections overlaid and detectors plotted as diamond marks is provided in Figure~\ref{fig-Domain}. We semi-randomly assign each building an optical thickness between 1 and 5 mean free paths, with the larger and more dense buildings having larger optical thicknesses, as in \cite{cook2018surrogate}.  The detector locations were randomly selected across the domain 
\begin{equation*}
\Omega = [0,250 \ \text{m}]\times [0,180 \ \text{m}]\times [5\times 10^8,5\times 10^{10} \  \text{Bq}]
\end{equation*}
while excluding the walls of the buildings. We assume the intensity $I$ of a source within this domain is between $5\times 10^8$ and $5\times 10^{10}$ Bq. This corresponds to a source between approximately one hundredth of a curie and one curie. We note that this is multiple orders of magnitude greater than the source strengths we were employing in Section~\ref{num_res}, however we are now considering a larger domain with attenuating materials present.

We assume the detectors have facial areas $A_i=.0058 \ \text{m}^2$ associated with a 3 inch diameter and 3 inch length, intrinsic efficiencies of $\epsilon_i = 62\%$, and dwell times of $\Delta t_i = 5$ second for $i=1,...,10$. These efficiencies are typical for a standard cilyndrical NaI scintillator measuring 662 keV gamma particles. We use a background rate of $B=300$ counts per second, which is typical for this type of detector in an urban environment. We compile the detector locations in Table~\ref{tab-detector_locations}. 

\begin{figure}
\centering
\includegraphics[height=68mm, width=105mm]{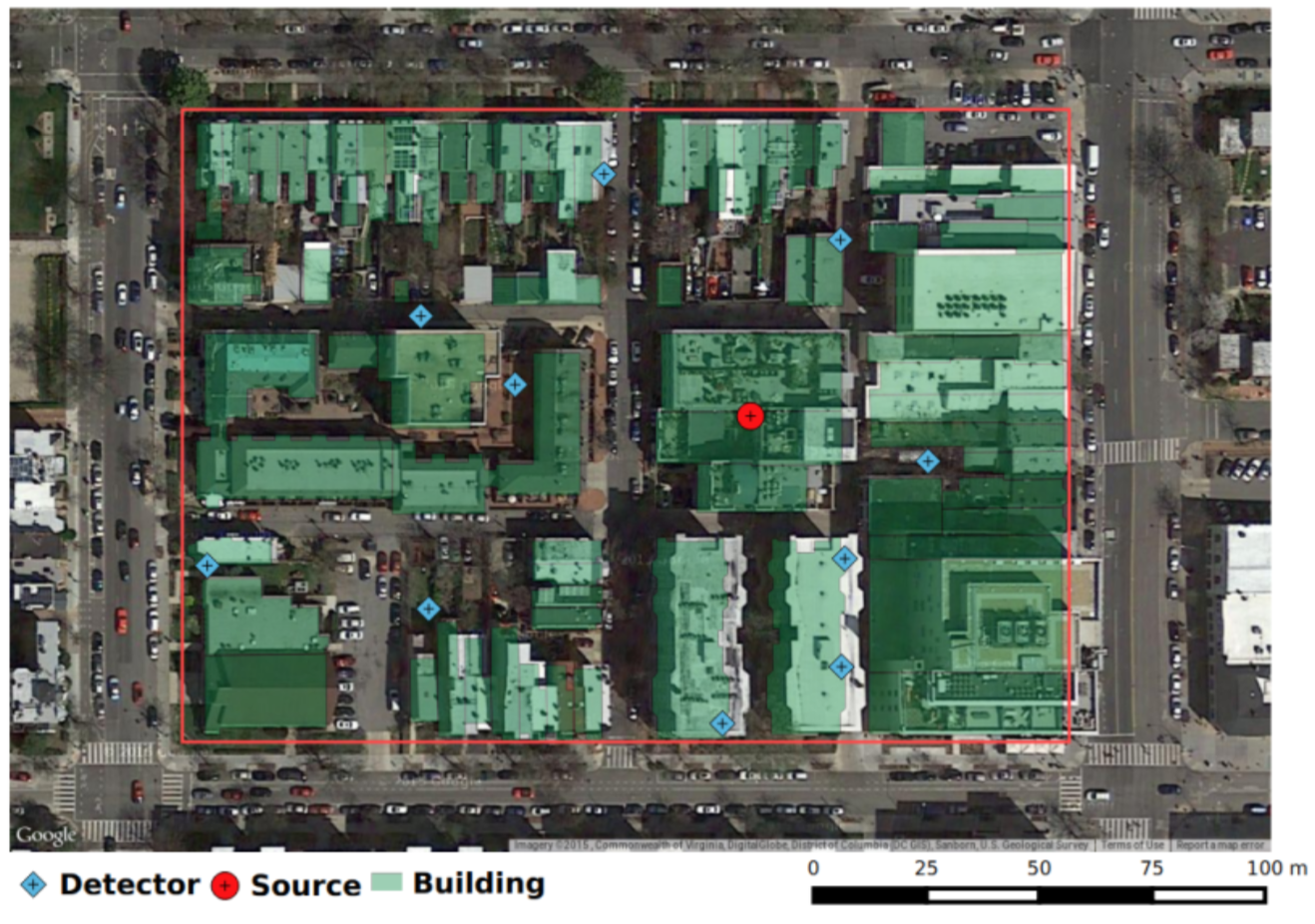}
\caption{Satellite image of domain with overlaid model geometry from \cite{Razvan}.}
\label{fig-Domain}
\end{figure}

\begin{table}[!t]
    \centering
    \caption{Location of NaI detectors plotted in Figure~\ref{fig-Domain}.}
    \begin{tabular}{|c|cc||c|cc|}
    \hline \hline
        Detector & x (m) & y (m)  & Detector & x (m) & y (m)  \\ \hline
        1 & 68.8 & 35.8 & 6 & 189.2 & 19.2 \\
        2 & 66.4 & 119.5 & 7 & 154.5 & 3.0  \\
        3 & 4.1 & 48.1 & 8 & 188.9 & 141.3  \\
        4 & 190.2 & 50.1 & 9 & 119.9 & 160.0  \\
        5 & 94.0 & 99.9 & 10 & 214.5 & 77.9  \\
    \hline \hline
    \end{tabular}
    \label{tab-detector_locations}
\end{table}

Here, we consider two cases in which we simulate a 8.7 mCi source at the locations $(s_0,\ t_0)= (158, \ 98)$ and $(s_0,\ t_0)= (120, \ 40)$, as in \cite{Hite_thesis}. We will call these Case 1 and Case 2 respectively. We employ the ray-tracing model (\ref{eq-numerMod}) to simulate detector responses and employ a simple particle filtering algorithm to perform Bayesian inference. We denote the true source location and intensity as $q_0=(r_0,I_0) = (s_0, t_0, I_0)$. Radioactive decay and detection are Poisson random processes so we take the detector response to be a Poisson-distributed random variable. To simulate detector observations $y_j, \ j=1,\hdots, d$, we sample from the Poisson distribution
\begin{equation}\label{ch-raytrace:eq-init_stat}
y_i \sim P(u^{RT}_i(q_0)+ B_i\Delta t_i), \ \  i=1,...,d.
\end{equation} 
Here, $P(\cdot)$ denotes the Poisson distribution with mean $u^{RT}_i(q_0) + B_i \Delta t_i$, for $u^{RT}_i(q_0)$ given by (\ref{eq-numerMod}). The responses of the $d$ detectors are mutually independent \cite{Cook_thesis}. 

We employ the ray-tracing model in the weight calculation (\ref{eq-weights}) in place of the simple quadratic attenuation detector model employed in Section~\ref{num_res}. We note that while the ray-tracing model is efficient enough for many applications, we must call the model once per particle simulated per time step. Therefore, we are unable to simulate many particles over many time steps. One approach would be to construct accurate and efficient surrogate models to approximate the ray-tracing model \cite{cook2018surrogate, Cook_thesis}. However, we employ the ray-tracing model within the particle filtering framework from Algorithm~\ref{alg-pf} using 100 measurements and 1000 particles. We remind the reader that the detectors we employ in this paper have a five second dwell time. We note also that we sample particles from the convex hull of the detector network as we did in Section~\ref{num_res}; i.e., we employ an importance function in Algorithm~\ref{alg-pf} that is uniform over the convex hull of the detector network. Even with this low number of measurements and particles, employing the ray-tracing model for this problem requires approximately three hours to compute the posterior. 

We simulate the conditions for Case 1 and attempt to localize this source using the detector measurements provided by the ray-tracing model. We plot the particles remaining after resampling in Figure~\ref{fig-pf_gefry1} and we observe that we have localized the source to within approximately 10.5 meters. We next consider Case 2 and plot the particle filtering results in Figure~\ref{fig-pf_gefry1}~(b). We note that we are able to localize this source to within approximately 8 meters. In these figures, we plot the true source location, the mean particle $(x,y)$ location, and the detectors. For many applications, this localization accuracy is sufficient. However, we note that this accuracy can be improved by employing robust resampling strategies which we leave this as future work. 

\begin{figure}[t]
    \centering
    \begin{subfigure}[b]{0.49\textwidth}
        \includegraphics[width=\textwidth]{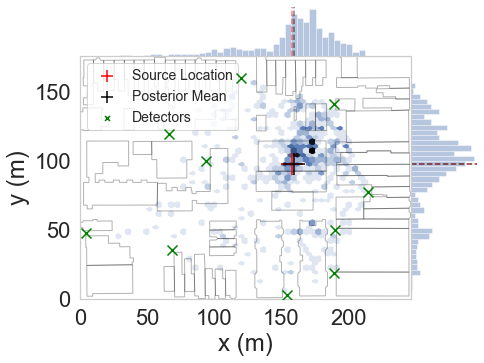}
    \end{subfigure}
    \begin{subfigure}[b]{0.49\textwidth}
        \includegraphics[width=\textwidth]{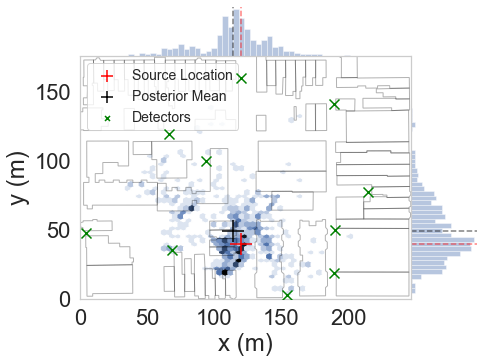}
    \end{subfigure}    
    \caption{Posterior density obtained by employing the ray-tracing model for Bayesian inference with the SIR particle filter from Algorithm~\ref{alg-pf}. We plot the cases where the source is located at (a) $r_S=[158 \text{ m}, 98 \text{ m}]$ and (b) $r_S=[120 \text{ m}, 40 \text{ m}]$.}
    \label{fig-pf_gefry1}
\end{figure}

Prior research into moveable detectors using the ray-tracing model (\ref{eq-numerMod}) and the simulated domain in Figure~\ref{fig-Domain} has employed mutual information with Delayed Rejection Adaptive Metropolis (DRAM), a Markov Chain Monte Carlo (MCMC) algorithm, to determine detector locations that would lower the uncertainty in the posterior distribution \cite{michaud2019simulation}. Additionally, in \cite{Razvan2}, the problem of optimizing detector placement was treated as a combinatorial problem where the sum of squares error in the average sense is minimized by the optimal detector network over a large number of candidate locations. The particle filter detailed in Algorithm~\ref{alg-pf} lends itself to a moving detector strategy, as measurements are taken continuously over time and are employed to inform the particle weights. Because of this, we are able to move the detectors after a certain number of measurements in a way that will decrease the uncertainty in the posterior approximated by the particles.  

To begin, we implement the simple detector movement strategy of moving the detectors each one meter toward the location of the mean of the estimated posterior within the space after each measurement is taken. This is performed by evaluating the mean of the non-resampled particles' positions and moving the detectors one meter toward this location. To avoid detector collisions with buildings, we check if the detector will be moved across the boundary of a building and, if so, we attempt to move it in the coordinate directions toward the source. We first check which coordinate direction would bring it closer to the source estimate and attempt to move the detector in that direction. If a movement in that direction results in a collision with a building, the other coordinate direction is checked. If both of these cases result in a building collision, we randomly select directions in which to move the detector until a direction is found such that a building collision is avoided. In this way, the detectors move toward the source location and do not collide with buildings within the geometry. However, these detectors are not technically moving, but moveable since they do not have a specific trajectory. We plot results for this simple movable detector strategy in Figure~\ref{fig-moving_dets}(a) with the detectors being moved one meter after each measurement is taken. Note that we continue to sample particles from the convex hull of the original detector network. 

We observe in Figure~\ref{fig-moving_dets}(a) that as the detectors are moved through the domain, regions that previously had little probability become candidate locations for the source again. This is especially apparent in the bottom right portion of the convex hull of the original detector network in Figure~\ref{fig-moving_dets}(a). Since the particle weights are reset during each resampling step, the detectors do not retain information on areas from which they have been moved away. In this way, the particles tend toward locations that have already been disqualified as source locations when the detectors are moved farther from those areas. Therefore, we propose that at each detector movement step, we construct a kernel density estimate (KDE) of the approximate posterior and sample particles from the KDE.  We then move the detectors toward the locations of high probability within the domain as before. We note that the KDE tends to smooth the posterior and for this reason should be avoided in some cases \cite{Smith}. However, for this simple problem with a single source, the smoothing of the posterior by the KDE does not negatively affect the results. 

We construct the KDE using \texttt{sklearn}'s \texttt{KernelDensity} \cite{scikit-learn} method after 10 measurements have been taken and sample particles from this distribution as new measurements are taken by the detectors. The detectors are then moved one meter in the direction of the mean of the approximated posterior. This process is repeated every 3 measurements so that the detectors converge toward the posterior's location of maximum probability within the space. We plot the results for this KDE-informed method of sampling in Figure~\ref{fig-moving_dets}(b) and we observe that the posterior is localized around the true source location with a mean value approximately two meters away from the true source location. These results were obtained using 100 measurements from the detectors, each of which has a five second dwell time. We note that this is a simple moveable detector strategy and that future work includes investigation of informed detector movement strategies such as the a Lyapunov-redesign method detailed in \cite{Egorova_moving_sensors}.

\begin{figure}[h]
    \centering
    \begin{subfigure}[b]{0.49\textwidth}
        \includegraphics[width=\textwidth]{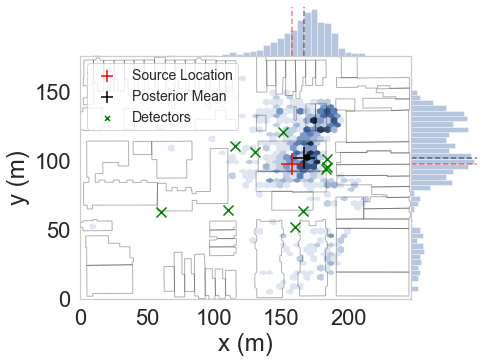}
    \end{subfigure}
    \begin{subfigure}[b]{0.49\textwidth}
        \includegraphics[width=\textwidth]{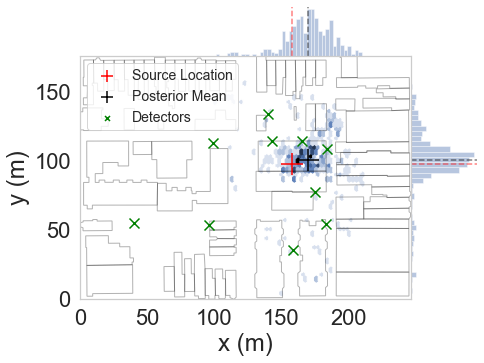}
    \end{subfigure}    
    \caption{Posterior density for Case 1 employing the particle filtering algorithm with moveable detectors. We plot (a) the results when particles are sampled uniformly over the convex hull and (b) the results when particles are sampled from the kernel density estimate of the posterior.}
    \label{fig-moving_dets}
\end{figure}


\section{Convergence Analysis}\label{conv_analysis}

We analyze the convergence of the particle filtering algorithm by dividing the problem into two parts. First, we show that the empirical distribution constructed by the discrete points, which we call particles, converges to the true underlying distribution as the number of particles increases. We next show that the particles, and in effect the underlying distribution, converge to the Dirac distribution centered at the true source location. We do this by showing that the radius of the cluster of particles with the largest weights converges to zero as the number of iterations $k$ goes to infinity. Using both of these results, we determine that the empirical distribution will converge to the Dirac distribution centered at the true source location.

We first consider the convergence of the approximation made by the discrete particles to the underlying distribution; i.e., the solution to the optimal filtering problem \cite{convergence}. We denote the empirical particle filtering distribution at the $k$th time step as $\pi_{k}^N$ and the ``true" distribution, which is the solution to the optimal filtering problem, as $\pi_{k}$. We are concerned with showing that $\mathbb{E}[(\langle\pi_{k}^N,\phi\rangle-\langle\pi_{k},\phi\rangle)]$ is bounded above by some value dependent on the number of particles, $N$, where $\langle \hspace{2pt}\cdot\hspace{2pt},\hspace{2pt}\cdot\hspace{2pt}\rangle$ denotes the Euclidean inner product. We take $\phi\in B(\mathbb{R}^{n_x})$, where $B(\mathbb{R}^{n_x})$ is a Borel $\sigma$-algebra on $\mathbb{R}^{n_x}$ and $n_x$ is the dimension of the state space.

Crisan and Doucet \cite{convergence} outline a theorem that, when applied to this problem, provides the first part of the convergence of the particle filtering algorithm.
\begin{theorem}
Given that the measurement process $h_k(\cdot)$ is a bounded function in argument $x_k\in \mathbb{R}^{n_x}$; i.e., $||h_k||<\infty$ for all $k$, then for any $\phi\in B(\mathbb{R}^{n_x})$, for all $k=1,...,T$, 

\begin{equation*}
\mathbb{E}[(\langle\pi_k^N,\phi\rangle-\langle\pi_k,\phi\rangle)^2]\leq C\frac{||\phi||^2}{N}.
\end{equation*}
\end{theorem}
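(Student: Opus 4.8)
\noindent The plan is to prove the bound by induction on the time index $k$, establishing at each stage an estimate of the form $\mathbb{E}[(\langle \pi_k^N, \phi\rangle - \langle \pi_k, \phi\rangle)^2] \le C_k \|\phi\|^2 / N$ in which the constant $C_k$ is finite for every $k \le T$; the theorem then follows by setting $C = \max_{1 \le k \le T} C_k$. The base case $k=0$ is the cleanest: the initial particles $p_0^i$ are drawn independently from the prior $\pi(x_0)$, so $\langle \pi_0^N, \phi\rangle$ is an average of $N$ independent, identically distributed, bounded random variables with common mean $\langle \pi_0, \phi\rangle$, and the elementary Monte Carlo variance identity gives $\mathbb{E}[(\langle \pi_0^N, \phi\rangle - \langle \pi_0, \phi\rangle)^2] = \mathrm{Var}(\phi(p_0^1))/N \le \|\phi\|^2/N$.

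For the inductive step I would track the Monte Carlo error through the three operations that carry $\pi_{k-1}^N$ into $\pi_k^N$: the prediction (mutation) step of~(\ref{eq-predict}), in which each particle is propagated by the importance kernel $\pi^N(x_k|x_{k-1})$; the update (correction) step of~(\ref{eq-update}), in which the weights~(\ref{eq-weights}) are applied and normalized; and the resampling step producing~(\ref{eq-posteriorEst}). The natural device is to insert the exact one-step image of each empirical measure between the operations, so that the total discrepancy $\langle \pi_k^N, \phi\rangle - \langle \pi_k, \phi\rangle$ becomes a telescoping sum of one-step errors. Invoking Minkowski's inequality in $L^2(\mathbb{P})$ then reduces the problem to bounding each one-step error separately. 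Each such term is controlled by conditioning on the particle system entering that step: given that history, the freshly sampled (or resampled) particles are conditionally independent and identically distributed, so a conditional variance computation yields an $O(1/N)$ bound, and the tower property of expectation removes the conditioning. The error propagated from step $k-1$ is then absorbed by the inductive hypothesis, since the prediction and push-forward operations act as bounded linear maps on the test function $\phi$.

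The hard part will be the correction step, because normalization introduces a ratio whose denominator is the particle estimate $\langle \pi^N(x_k|y_{1:k-1}), \pi(y_k|\cdot)\rangle$ of the true normalizing constant $\pi(y_k|y_{1:k-1})$. To bound the error of a ratio of random quantities one must keep the denominator bounded away from zero uniformly in $N$, and this is exactly where the hypothesis $\|h_k\| < \infty$ enters: boundedness of the measurement map forces the likelihood $\pi(y_k|x_k)$ to be bounded above, and on the compact state space $\Omega$ of the localization problem it is also bounded below away from zero, so the normalizing constant is pinned between two positive constants. With the denominator controlled I would split the ratio error using the elementary identity $a/b - a'/b' = (a-a')/b + a'(b'-b)/(bb')$, each piece of which then reduces to an already-established $O(1/N)$ term. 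The resampling step adds a further conditional-variance contribution, but it is benign under the same conditioning argument, since after resampling the particles are drawn from the discrete measure $\tilde{\pi}_k^N$ of~(\ref{eq-update}) and contribute only one more Monte Carlo term. Combining the three one-step bounds through Minkowski's inequality closes the induction with an updated constant $C_k$; the detailed bookkeeping of how $C_k$ depends on $k$ and on the upper and lower likelihood bounds is routine, and I would relegate it to the recursion relating $C_k$ to $C_{k-1}$.
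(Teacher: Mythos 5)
Your proposal cannot be compared against a proof in the paper because the paper does not prove this theorem: it is quoted verbatim as a result of Crisan and Doucet \cite{convergence}, with no argument supplied. What you have written is a faithful outline of the standard proof in that reference --- induction on $k$, telescoping the error through the prediction, correction, and resampling maps, conditional i.i.d.\ variance bounds of order $1/N$ at each stage combined via Minkowski's inequality in $L^2$, and a ratio decomposition for the normalized update --- so in substance you are reconstructing exactly the argument the paper delegates to its citation. One technical point worth correcting: you do not need (and for the Poisson likelihood on this domain may not have) a uniform lower bound on $\pi(y_k|x_k)$ over $\Omega$. In the identity $a/b - a'/b' = (a-a')/b + (a'/b')\,(b'-b)/b$, choosing $b$ to be the \emph{true} normalizing constant $\pi(y_k|y_{1:k-1})$ and noting that the empirical ratio $a'/b' = \langle \pi^N, \phi\, g_k\rangle / \langle \pi^N, g_k\rangle$ is automatically bounded by $\|\phi\|$ means the random denominator never needs to be controlled from below; only strict positivity of the true normalizing constant is required, which is the actual Crisan--Doucet hypothesis. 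A second caveat, directed more at the paper than at you: your resampling bound assumes multinomial draws from $\tilde{\pi}_k^N$, which is what the cited theorem covers, whereas the algorithm actually implemented here deterministically retains the top-weighted $40\%$ and replaces the rest with uniform draws; that scheme is not unbiased, so the theorem as stated does not apply to it without further argument.
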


Next, we show that the particles converge to the Dirac distribution centered at the true source location as more measurements become available. That is, we show that $\mathbb{E}[\langle\pi_k^N,\phi\rangle-\langle\delta_{(x_s,y_s)},\phi\rangle] \rightarrow 0 $ as $k \rightarrow \infty$. We know that for our application, we can obtain this stronger form of convergence for our particle filtering algorithm due to the statistical independence of the measurements and the lack of dependence of our observations on the state variables. We use both of these facts in our analysis to show the particle filter convergence. 

This problem can be reformulated as showing that the ball centered at the true source location and comprised of the highest weighted $f\times 100$ percent of particles at the $(k+1)$st step, has a probability of 1 of being smaller than or equal to the size of the same ball from the $k$th step, as $k \rightarrow \infty$. We denote the radius of these balls as $r_k$ and we show that $\Pr[r_{k+1} \leq r_k] \rightarrow 1$ as $k \rightarrow \infty$. 

For the considered algorithm, the resampling step consists of retaining the top weighted $(1-f)\times100$ percent of the particles and replacing the other $f\times 100$ percent of particles with randomly distributed particles drawn from a uniform distribution across the state space. The numerical results were obtained using a value of $f=0.6$, meaning we resampled $60\%$ of the particles at each time step. Because of this simple resampling strategy, $\Pr(r_{k+1} \leq r_k)$ can be approximated by $\Pr[h\big(p_{k}^f\big) > h(p_k^*)]$. Here, we denote the particles from the $k$th step by $p^i_k$, where $i=1,...,N$. The $(f\times N)$th weighted particle is $p^f_k$ and $p^*_k$ denotes any of the particles that are closer to the source than the particle $p^f_k$. The weight of each particle is given by the weight function $h(p_k^i)$, which is dependent on the Poisson distributed measurements. We note that the distance from $p^f_k$ to the mean of the $(1-f)\times N$ highest weighted particles defines the radius $r_k$. Therefore, the probability that the radius of the ball of the top $f$ percent of the particles is decreasing is equal to the probability that any of the top weights is less than the $(f\times N)$th weight.  

\par Using this relation, we next show that $\Pr[h(p_{k}^f) > h(p_k^*)]\rightarrow 0$ as $k\rightarrow \infty$. 
\begin{theorem}
We assume that the distribution to be approximated is a Dirac distribution centered at the source parameter values and that the measurements are Poisson distributed. We then obtain $\Pr[h(p_{k}^f) > h(p_k^*)]\rightarrow 0$ as $k\rightarrow \infty$.
\end{theorem}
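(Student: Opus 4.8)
The plan is to recast the single-step weight comparison as a comparison of time-averaged log-likelihoods and then exploit the fact that, in expectation, the particle whose modeled detector responses match the truth attains the strictly largest weight. Because the source is stationary and the measurements $y_m=[y_m^1,\dots,y_m^d]$ are statistically independent across time with each $y_m^j$ Poisson distributed with mean $u_j^K(q_0)$ (up to the known background), the per-step log-likelihoods evaluated at a fixed particle location form an i.i.d.\ sequence. I would therefore introduce the time-averaged weight $\bar h_k(q)=\tfrac1k\sum_{m=1}^k h_m(q)$, where $h_m(q)=\sum_{j=1}^d \log P(y_m^j; u_j^K(q))$ is the per-step contribution in (\ref{eq-weights}), and observe that after normalizing by $k$ the event $\{h(p_k^f)>h(p_k^*)\}$ becomes $\{\bar h_k(p_k^f)>\bar h_k(p_k^*)\}$.

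First I would compute the expected single-step weight. Taking the expectation of $h_m(q)$ over a Poisson measurement with mean $u_j^K(q_0)$ gives, up to an additive constant independent of $q$,
\begin{equation*}
\bar H(q) := \mathbb{E}[h_m(q)] = \sum_{j=1}^d \big( u_j^K(q_0)\log u_j^K(q) - u_j^K(q) \big).
\end{equation*}
Writing $\bar H(q_0)-\bar H(q)$ term by term yields $\sum_{j=1}^d \big( u_j^K(q_0)\log\tfrac{u_j^K(q_0)}{u_j^K(q)} - u_j^K(q_0) + u_j^K(q)\big)$, each summand being the Poisson Kullback--Leibler discrepancy, which is nonnegative and vanishes only when $u_j^K(q)=u_j^K(q_0)$. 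Hence $\bar H$ is maximized precisely at those $q$ whose modeled responses match the truth at every detector, and under the identifiability (multilateration) condition that distinct source states yield distinct response profiles $\{u_j^K(q)\}_j$, this forces $q=q_0$. This identifies the true source as the unique maximizer of the expected weight and shows that a particle $p_k^*$ closer to $q_0$ than the cutoff particle $p_k^f$ satisfies $\bar H(p_k^*) > \bar H(p_k^f)$, with a strictly positive gap $\Delta_k := \bar H(p_k^*)-\bar H(p_k^f)>0$.

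With the gap in hand, I would conclude probabilistically. For fixed particle locations the difference $X:=\bar h_k(p_k^f)-\bar h_k(p_k^*)$ is an average of $k$ i.i.d.\ terms with mean $-\Delta_k$ and variance $O(1/k)$, so a concentration bound (Chebyshev suffices, or Bernstein given the sub-exponential Poisson tails) gives
\begin{equation*}
\Pr[h(p_k^f) > h(p_k^*)] = \Pr\big[ X > 0 \big] \le \Pr\big[ |X + \Delta_k| \ge \Delta_k \big] \le \frac{C}{k\,\Delta_k^2},
\end{equation*}
which tends to $0$ as $k\to\infty$ provided $\Delta_k$ is bounded below. The surviving particles lie in a compact state space and, away from $q_0$, the gap $\Delta_k$ is bounded below by a positive constant determined by the detector geometry, so the bound can be made uniform over the finitely many particle pairs at each step. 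Combining this with the preceding theorem (weak convergence of $\pi_k^N$ to $\pi_k$) then yields convergence of the empirical distribution to $\delta_{(x_s,y_s)}$.

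The main obstacle is the monotonicity step: translating ``$p_k^*$ is closer to the source than $p_k^f$'' into the strict inequality $\bar H(p_k^*) > \bar H(p_k^f)$. The maximization argument only shows that $\bar H$ peaks at $q_0$; it does not by itself guarantee that $\bar H$ decreases monotonically with Euclidean distance, since two points equidistant from $q_0$ can produce different detector responses, especially under the ray-tracing model (\ref{eq-numerMod}) with intervening buildings. I expect to need either an explicit monotonicity assumption on the aggregated model response, or a reformulation in which ``closer'' is measured in the response metric $\sum_j (u_j^K(q)-u_j^K(q_0))^2$ rather than in Euclidean distance, together with a uniform-in-$k$ lower bound on $\Delta_k$. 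A secondary technical point is that $p_k^f$ and $p_k^*$ are selected using the same measurements that form the weights; I would handle this by conditioning on the particle configuration produced by the previous resampling step before applying the concentration estimate.
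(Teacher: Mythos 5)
Your route is entirely different from the paper's. The paper's proof is three lines: it factors $\Pr[h(p_k^f)>h(p_k^*)]$ through an intermediate threshold $\epsilon$, bounds the result by $\Pr[h(p_k^f)>\epsilon]$, applies Markov's inequality, and then asserts $\mathbb{E}[h(p_k^f)]=\delta_S(p_k^f)$ --- that is, the expected weight of a particle equals the target Dirac density evaluated at that particle, which vanishes off the source. You replace this with a Kullback--Leibler identifiability argument plus a law-of-large-numbers concentration bound, which is the statistically standard way to establish consistency of a likelihood-based estimator and is considerably more principled than identifying an expected log-likelihood with a delta function.

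That said, your argument has genuine gaps beyond the monotonicity issue you already flag. First, the time-averaging reformulation does not describe the algorithm being analyzed: by (\ref{eq-weights}) and Algorithm~\ref{alg-pf}, the weight $h(p_k^i)$ at step $k$ is computed from the single measurement $y_k$ alone, and the weights are reset after each resampling step rather than accumulated, so the recursive factor $w_{k-1}^i$ of (\ref{eq-theoreticalweights}) is absent in practice. Consequently $\{h(p_k^f)>h(p_k^*)\}$ is a one-measurement event whose probability, for fixed particle positions, is a constant in $k$; the $O(1/k)$ decay you obtain only materializes if the weights are cumulative log-likelihoods, which would be a different algorithm. For the actual scheme, any decay in $k$ must come from the evolution of the particle positions themselves, which your argument does not track. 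Second, the uniform lower bound on $\Delta_k$ is circular: the theorem is precisely the claim that the retained particles contract onto $q_0$, so both $p_k^f$ and $p_k^*$ approach $q_0$ and $\Delta_k=\bar H(p_k^*)-\bar H(p_k^f)\to 0$; the bound $C/(k\,\Delta_k^2)$ then need not vanish without a quantitative comparison of the contraction rate of the cloud against the shrinking of the gap. To be fair, the paper's own proof is also heuristic --- the introduction of ``for any $\epsilon$,'' the application of Markov's inequality to a log-likelihood that is not nonnegative, and the identification $\mathbb{E}[h(p_k^f)]=\delta_S(p_k^f)$ are all hard to justify rigorously --- so neither argument is airtight; but yours, as written, proves a statement about a cumulative-weight variant of the filter rather than the one in Algorithm~\ref{alg-pf}.
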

\begin{proof}
We have, 
\begin{align*}
\Pr[h(p_{k}^f) > h(p_k^*)] &= \Pr[h(p_{k}^f)>\epsilon, h(p_k^*)<\epsilon \text{  for any } \epsilon] \\
&= \Pr[h(p_{k}^f))>\epsilon \big|h(p_k^*)<\epsilon]\Pr[h(p_k^*)<\epsilon], \\
\end{align*}
since the measurements are statistically independent. Employing properties of conditional probabilities, we obtain
\begin{align*}
\Pr[h(p_{k}^f))>\epsilon \big|h(p_k^*)<\epsilon]\Pr[h(p_k^*)<\epsilon] &\leq \Pr[h(p_{k}^f))>\epsilon \big| h(p_k^*)<\epsilon] \\ 
&\leq \Pr[h(p_{k}^f))>\epsilon]. 
\end{align*}
We now employ Chebyshev's inequality to obtain
\begin{align*}
 \Pr[h(p_{k}^f))>\epsilon]  &\leq \frac{\mathbb{E}[h(p_{k}^f))]}{\epsilon} \\
&= \frac{\delta_{S}(p_f^k)}{\epsilon} = 0, \text{ as } k \rightarrow \infty,
\end{align*}
where we have additionally used the fact that the distribution we are approximating is a Dirac distribution centered at the true source parameters. 
\end{proof} 

We note that if $\delta_{S}(p^f_k) \neq 0$, then all $f$ percent of the particles are at the true source location. However, if $\delta_{S}(p^f_k) = 0$, then there is a larger than zero probability of the uniformly distributed resampled particles being closer to the true source location than $p^f_k$. Therefore, $P(r_{k+1} \leq r_k) \rightarrow 1$ as $k \rightarrow \infty$.

\section{Conclusions}\label{conclusions}
We have provided numerical and analytic proof of the convergence of this particle filtering algorithm for the problem of determining a source location and intensity from detector observations. The posterior approximation of the source parameters provided by this particle filtering algorithm converges numerically for sources at multiple locations within the domain when we employ a prior constrained to within the convex hull of the detector network. The employed particle filtering algorithm was validated using publicly available experimental data obtained in an open-field environment. We then employed this algorithm to localize a source of radiation within a two-dimensional simulated urban environment. We investigated the use of mobile detectors within the framework of this sequential Monte Carlo method and found that we were able to localize the source to within several meters. Additionally, we proved the convergence of the approximation by the particles to the true underlying distribution and the convergence of the particles to the Dirac distribution centered at the source properties. These convergence results do not necessarily apply to other problems, unless they have the similar properties of statistically independent measurements and a lack of dependence of the measurement process on the state space. 

One direction for future research is to extend this analysis to broader classes of particle filtering methods and source localization problems. This includes analytically investigating the convergence of methods with other resampling algorithms which have already been shown to converge numerically \cite{Camila}. Previous work has been performed to extend this source localization problem to a more complex urban environment employing the MCMC algorithms DRAM and DiffeRential Evolution Adaptive Metropolis (DREAM) algorithm \cite{Schmidt_thesis}. High-fidelity nuclear transport codes have been employed to simulate detector measurements in spatially 3-D urban domains \cite{cook_mcnp}, which have then been used to solve the source localization problem using DRAM. The use of particle filtering algorithms within this framework may facilitate complex moving detector and source strategies, which also comprises future research.

\section*{Acknowledgements}
This research was supported by the Department of Energy National Nuclear Security Administration (NNSA) under the Award Number DE-NA0002576 through the Consortium for Nonproliferation Enabling Capabilities (CNEC).

\newpage

\bibliography{JaredCook}{}
\bibliographystyle{plain}

\end{document}